\newif\iflong
\begin{document}
	\newcommand{\plus}{{+}}
\newcommand{\code}[1]{\textsf{\small #1}\xspace}
\SetKwInOut{Input}{Input}
\SetKwInOut{Output}{Output}
\SetKwProg{Fn}{Function}{}{}
\DontPrintSemicolon

\newcommand{\quant}{\Q}

\newcommand{\setn}[1]{\ensuremath{\underline{#1}}}

\newcommand{\suchthat}{\,\,\ensuremath{|}\,\,}

\newcommand{\taskone}{1\xspace}
\newcommand{\tasktwo}{2\xspace}
\newcommand{\taskthree}{3\xspace}
\newcommand{\taskfour}{4\xspace}

\newcommand{\z}{\cellcolor{black}}
\newcommand{\x}{\cellcolor{lightgray}}

\renewcommand\UrlFont{\color{blue}\rmfamily}

\newcommand{\xpath}{\pi}
\newcommand{\Paths}[2]{\mathit{Paths}_{#1}^{#2}}
\newcommand{\fPaths}[2]{\mathit{fPaths}_{#1}^{#2}}

\newcommand{\dom}{\mathit{dom}}
\newcommand{\dtmc}{\mathcal{D}}
\newcommand{\dtmcs}{\mathbb{D}}
\newcommand{\hdtmcs}{\hat{\dtmcs}}
\newcommand{\emptydtmc}{\dtmc_{\emptyset}}

\newcommand{\hdtmc}{\hat{\dtmc}}
\newcommand{\PD}{\mathcal{P}}
\newcommand{\eval}{\mathit{eval}}

\newcommand{\mdp}{\mathcal{M}}
\newcommand{\mdps}{\mathbb{M}}
\newcommand{\hmdp}{\hat{\mdp}}
\newcommand{\hmdps}{\hat{\mdps}}
\newcommand{\modelsof}[1]{\textit{models}(#1)}

\newcommand{\HyperProb}{\textsf{\small HyperProb}\xspace}

\newcommand{\LTL}{\textsf{\small LTL}\xspace}
\newcommand{\CTL}{\textsf{\small CTL}\xspace}
\newcommand{\CTLstar}{\textsf{\small CTL$^*$}\xspace}
\newcommand{\PCTL}{\textsf{\small PCTL}\xspace}
\newcommand{\PCTLstar}{\textsf{\small PCTL$^*$}\xspace}
\newcommand{\HyperPCTL}{\textsf{\small HyperPCTL}\xspace}
\newcommand{\AHLTL}{\textsf{\small A-HLTL}\xspace}
\newcommand{\HyperPCTLS}{\textsf{\small HyperPCTL$^*$}\xspace}
\newcommand{\AHyperPCTL}{\textsf{\small AHyperPCTL}\xspace}
\newcommand{\ReachHyperPCTL}{\textsf{\small ReachHyperPCTL}\xspace}
\newcommand{\PHL}{\textsf{\small PHL}\xspace}
\newcommand{\NHyperPCTL}{\textsf{\small HyperPCTL}\xspace}
\newcommand{\HyperLTL}{\textsf{\small HyperLTL}\xspace}
\newcommand{\HyperCTL}{\textsf{\small HyperCTL}\xspace}
\newcommand{\HyperCTLstar}{\textsf{\small HyperCTL$^*$}\xspace}
\newcommand{\AFHyperLTL}{\mbox{AF-HyperLTL}\xspace}
\newcommand{\matching}{\mathcal{M}}
\newcommand{\topolgy}{\mathcal{T}}

\newcommand{\lang}{\mathcal{L}}
\newcommand{\Inf}{\mathsf{Inf}}
\newcommand{\pba}{\mathcal{P}}
\newcommand{\alphabet}{\mathrm{\Sigma}}
\newcommand{\state}{s}
\newcommand{\vstate}{\boldsymbol{\state}}
\newcommand{\states}{S}
\newcommand{\statespace}{\states}
\newcommand{\hstate}{\hat{\state}}
\newcommand{\hstateof}[2]{\hat{\state}_{#1}(#2)}
\newcommand{\hstateprimeof}[2]{\hat{\state}'_{#1}(#2)}
\newcommand{\hstatedoubleprimeof}[2]{\hat{\state}''_{#1}(#2)}

\newcommand{\bstate}{\bar{\state}}
\newcommand{\bstateof}[2]{\bar{\state}_{#1}(#2)}
\newcommand{\bstates}{\bar{\states}}

\newcommand{\map}{\mathsf{map}}

\newcommand{\scheduler}{\sigma}
\newcommand{\sched}{\scheduler}
\newcommand{\minscheduler}{\scheduler_{*}}
\newcommand{\maxscheduler}{\scheduler^{*}}
\newcommand{\vscheduler}{\boldsymbol{\scheduler}}
\newcommand{\schedulers}[1]{\Sigma^{#1}}

\newcommand{\hscheduler}{\hat{\scheduler}}
\newcommand{\hschedulerof}[1]{\hat{\scheduler}(#1)}
\newcommand{\hschedulers}{\hat{\schedulers{}}}

\newcommand{\bscheduler}{\bar{\scheduler}}
\newcommand{\bschedulers}{\hat{\schedulers{}}}

\newcommand{\btruth}[2]{\textit{isTrue}_{\,#1,#2}}
\newcommand{\ptruth}[2]{\textit{pr}_{\,#1,#2}}
\newcommand{\formula}{\varphi}
\newcommand{\ssqform}{\varphi^{sch}}
\newcommand{\sqform}{\varphi^{st}}
\newcommand{\nqform}{\varphi^{\textit{nq}}}
\newcommand{\arform}{\varphi^{\textit{ar}}}
\newcommand{\pathform}{\varphi^{\textit{path}}}

\newcommand{\propof}[2]{#1_{#2}}

\newcommand{\Trace}{\mathsf{Traces}}
\newcommand{\trace}{t}
\newcommand{\qtrace}{\eta}
\newcommand{\sform}{\mathrm{\Phi}}
\newcommand{\pform}{\varphi}

\newcommand{\action}{\alpha}
\newcommand{\altaction}{\beta}

\newcommand{\ins}{\textit{inst}}
\newcommand{\conc}{\circ}

\newcommand{\modes}{Q}
\newcommand{\mode}{q}
\newcommand{\modef}{\textit{mode}}

\renewcommand{\qed}{$~\blacksquare$}
\newcommand{\naturals}{\mathbb{N}_{>0}}
\newcommand{\naturalszero}{\mathbb{N}_{\geq 0}}

\newcommand{\F}{\LTLdiamond}
\newcommand{\G}{\LTLsquare}
\newcommand{\U}{\mathbin{\mathcal U}} 
\newcommand{\X}{\LTLcircle}
\newcommand{\Waitfor}{\,\mathcal W\,}
\newcommand{\suffix}[2]{#1[#2,\infty]}

\newcommand{\AP}{\mathsf{AP}}
\newcommand{\Next}{\X}
\newcommand{\Finally}{\F}
\newcommand{\Globally}{\G}
\newcommand{\V}{\mathcal{V}}

\newcommand{\pr}{\mathbb{P}}
\renewcommand{\Pr}{\mathrm{Pr}}
\newcommand{\prob}{P}
\newcommand{\Cyl}{\textit{Cyl}}
\newcommand{\parallelsum}{\mathbin{\|}}
\newcommand{\emptyword}{\epsilon}

\renewcommand{\topfraction}{0.96}
\renewcommand{\bottomfraction}{0.95}
\renewcommand{\textfraction}{0.1}
\renewcommand{\floatpagefraction}{1}
\renewcommand{\dbltopfraction}{.97}
\renewcommand{\dblfloatpagefraction}{.99}

\newcommand{\init}{\mathit{\iota_{init}}}
\newcommand{\tpm}{\mathbf{P}}
\renewcommand{\P}{\mathbf{P}}
\newcommand{\Act}{\mathit{Act}}
\newcommand{\act}{\mathit{act}}
\newcommand{\supp}{\mathit{supp}}
\newcommand{\start}{\mathit{init}}
\newcommand{\tru}{\mathtt{true}}
\newcommand{\fals}{\mathtt{false}}

\newcommand{\dbsim}{\mathit{dbSim}}
\newcommand{\res}{\mathit{res}}
\newcommand{\qout}{\mathit{qOut}}
\newcommand{\env}{\mathit{env}}
\newcommand{\fail}{\mathit{fail}}

\newcommand{\comp}[1]{\textsf{\small #1}}

\newcommand{\sigmakp}{$\mathsf{\Sigma^p_k}$\comp{-complete}\xspace}
\newcommand{\pikp}{$\mathsf{\Pi^p_k}$\comp{-complete}\xspace}

\newcommand\donotshow[1]{}

\definecolor{mGreen}{rgb}{0,0.6,0}
\definecolor{mGray}{rgb}{0.5,0.5,0.5}
\definecolor{mPurple}{rgb}{0.58,0,0.82}
\definecolor{backgroundColour}{rgb}{0.95,0.95,0.92}

\lstdefinestyle{CStyle}{
    backgroundcolor=\color{backgroundColour},   
    commentstyle=\color{mGreen},
    keywordstyle=\color{magenta},
    numberstyle=\tiny\color{mGray},
    stringstyle=\color{mPurple},
    basicstyle=\footnotesize,
    breakatwhitespace=false,         
    breaklines=true,                 
    captionpos=b,                    
    keepspaces=true,                 
    numbers=left,                    
    numbersep=2pt,                  
    showspaces=false,                
    showstringspaces=false,
    showtabs=false,                  
    tabsize=2,
    language=C
}

\newcommand{\borzoo}[1]{\textcolor{green}{#1}}

\newcommand{\Out}{\textit{Out}}
\newcommand{\In}{\textit{In}}
\newcommand{\symbprob}{\textit{Symb}\xspace}
\newcommand{\distance}{\textit{dist}}

\newcommand{\formH}{\textit{Form}_{\ReachHyperPCTL}}
\newcommand{\formR}{\textit{Form}_{\textit{NRA}}}
\newcommand{\exprH}{\textit{Expr}_{\ReachHyperPCTL}}
\newcommand{\exprR}{\textit{Expr}_{\textit{NRA}}}
\newcommand{\result}{\textit{color}}

\newcommand{\TA}{{\bf TA}\xspace}
\newcommand{\TS}{{\bf TS}\xspace}
\newcommand{\PW}{{\bf PW}\xspace}
\newcommand{\PC}{{\bf PC}\xspace}
\newcommand{\HS}{{\bf HS}\xspace}
\newcommand{\IJS}{{\bf IJ}\xspace}
\newcommand{\ROB}{{\bf RO}\xspace}

\newcommand{\TRUE}{{\bf T}\xspace}
\newcommand{\FALSE}{{\bf F}\xspace}
\newcommand{\WALLS}{{\bf W}\xspace}
\newcommand{\NWALLS}{{\bf NW}\xspace}
\newcommand{\GOAL}{{\bf G}\xspace}

\newcommand{\R}{\mathbb{R}}
\newcommand{\N}{\mathbb{N}}
\newcommand{\Z}{\mathbb{Z}}
\newcommand{\Q}{\mathbb{Q}}


\newcommand{\sep}{\ensuremath{~\mid~}}
\newcommand{\fal}{\mathtt{false}}
\renewcommand{\undef}{\bot}

\newcommand{\rew}{\mathcal{R}}
\newcommand{\re}{\textit{rew}}

\newcommand{\fullMDP}{(S, Act, \tpm, \AP, L)}
\newcommand{\fullMDPR}{(S, Act, \tpm, \AP, L, \re)}
\newcommand{\xor}{\ensuremath{\oplus}}
\newcommand{\impl}{\ensuremath{\rightarrow}}
\newcommand{\parComp}{|\!|}

\renewcommand{\phi}{\varphi}
\newcommand{\holdsToInt}{\textit{hInt}}
\newcommand{\holds}{\textit{h}}
\newcommandx{\holdsWith}[3][1=\statetup, 2=\phi, 3={\sschedtup}, usedefault]{\holds_{#1, #2}}
\newcommandx\holdsI[3][1=\statetup, 2=\phi, 3={\sschedtup}, usedefault]{\holdsToInt_{#1, #2}}
\newcommandx\probWith[3][1=\statetup, 2=\phi, 3={\sschedtup}, usedefault]{\textit{pr}_{#1, #2}}

\newcommand{\define}{\textit{def}}
\newcommand{\bsigma}{\textbf{$\sigma$}}
\newcommand{\fatr}{\vstate}

\renewcommand{\textfraction}{0.01}
\renewcommand{\floatpagefraction}{.99}
\renewcommand{\topfraction}{.99}
\renewcommand{\bottomfraction}{.99}

\newcommand{\od}[1]{\textcolor{blue}{#1}}
\newcommand*\samethanks[1][\value{footnote}]{\footnotemark[#1]}

\newcommand{\stutteract}{\varepsilon} 
\newcommand{\notstutact}{\overline\varepsilon}
\newcommand\ssched{\tau} 

\newcommand\Tau{\mathcal{T}}

\newcommand{\Mstutter}[1][\mdp]{#1 {\mathchoice
		{\raisebox{1pt}{$^{\stutteract}$}}
		{\raisebox{1pt}{$^{\stutteract}$}}
		{\raisebox{0.5pt}{$^{\stutteract}$}}
		{\raisebox{0.2pt}{$^{\stutteract}$}}}
}   

\newcommand\sSched[1][\mdp]{{\Tau}^{#1}}
\newcommandx{\Dstutter}[3][1=\mdp,2=\sched,3=\ssched,usedefault]{#1^{#2,#3}} 
\newcommand\refl [1] [\mdp]{#1}
\newcommand\DTMCtuple [1] [\refl]{(#1[S], \allowbreak \AP, \allowbreak #1[L], \allowbreak #1[\P] )}

\newcommandx{\ext}[3][1=\mdp,2=\sched,3=\ssched,usedefault]{#1^{#2,#3}} 

\newcommand\pquant{\varphi^{q}}
\newcommand\psched{\varphi^{sch}}
\newcommand\pstate{\varphi^{s}} 
\newcommand\pstutt{\varphi^{st}}
\newcommand\pnonquant{\varphi^{nq}}
\newcommand\pprob{\varphi^{pr}}
\newcommand\ppath{\varphi^{path}}

\newcommand\variable[1][s]{\hat{#1}}
\newcommand\varstate{\hat s}
\newcommand\varsched{\hat{\sched}}
\newcommand\varssched{\hat{\ssched}}

\newcommand\statetup{{\boldsymbol s}}
\newcommand\schedtup{{\boldsymbol \sched}}
\newcommand\sschedtup{{\boldsymbol \ssched}}
\newcommand\evaltups[1][\statetup]{\schedtup, #1, \sschedtup}
\newcommand{\numstutter}{n}
\newcommand{\numstates}{l}
\newcommandx\Probm [2][1=\mdp,2=\statetup,usedefault]{\textit{Pr}^{#1}_{#2}}

\renewcommand*{\iff}{\textit{ iff }}
\newcommand\squigm{
	\mathbin{\stackon[1pt]{$\rightsquigarrow$}{$\scriptscriptstyle \numstutter$}}
}

\renewcommand\implies{\Rightarrow}

\newcommand{\Splus}{\states_+}
\newcommand{\actiontup}{\boldsymbol{\alpha}}
\newcommand{\statetupmc}{\boldsymbol s}
\newcommand{\fsucc}{\mathit{succ}}
\newcommand{\go}{\textit{go}}
\newcommand{\htup}{\boldsymbol{h}}
\newcommand{\Tr}{\textit{tr}}

\newcommand{\funtup}{\boldsymbol{f}}
\newcommand{\stutterlength}{m}

\newcommand{\lIfElse}[3]{\lIf{#1}{#2 \textbf{else}~#3}}

\SetKwFunction{FMain}{Main}
\SetKwFunction{FSem}{Sem}
\SetKwFunction{FTruth}{Truth}
\SetKwFunction{FSemUUntil}{SemUnboundedUntil}
\SetKwFunction{FSemBUntil}{SemBoundedUntil}
\SetKwFunction{FSemUGlobally}{SemUnboundedGlobally}

\newcommand\phifull{
	\exists \varsched \phifromstate}
\newcommand\phifromstate{
	Q_1 \varstate_1(\varsched)\ldots Q_{\numstates} \varstate_{\numstates}(\varsched) \phifromstut }
\newcommand\phifromstut{
	\exists \varssched_1(\varstate_{k_1}) \ldots \exists \varssched_{\numstutter}(\varstate_{k_{\numstutter}}) . \pnonquant }

\newcommand{\CE}{{\bf CE}\xspace}
\newcommand{\TL}{{\bf TL}\xspace}
\newcommand{\ACDB}{{\bf ACDB}\xspace}

\newcommand{\orcid}[1]{\href{https://orcid.org/#1}{\includegraphics[width=3mm]{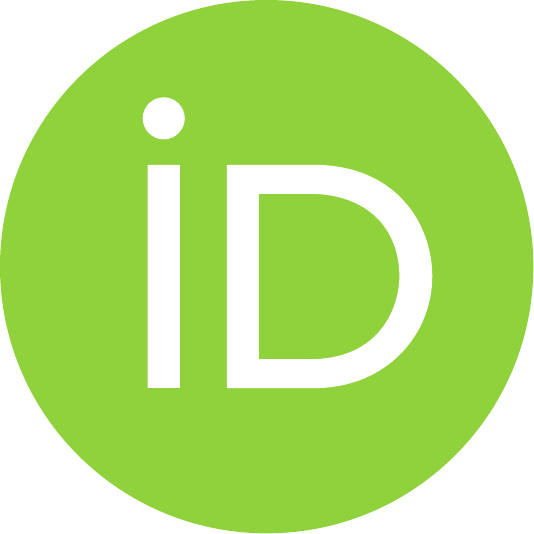}}}

\title{Introducing Asynchronicity to\\ Probabilistic Hyperproperties} 
\author{Lina Gerlach\inst{1}\orcid{0009-0002-5506-6181} \and
	Oyendrila Dobe\inst{2}\orcid{0000-0002-0799-1498} \and
	Erika {\'A}brah{\'a}m\inst{1}\orcid{0000-0002-5647-6134}\and 
	Ezio Bartocci\inst{3}\orcid{0000-0002-8004-6601}\and 
	Borzoo Bonakdarpour\inst{2}\orcid{0000-0003-1800-5419}}
\authorrunning{L. Gerlach et al.}
\institute{
	RWTH Aachen University, Aachen, Germany,\\
	\email{\{gerlach,abraham\}@cs.rwth-aachen.de}
	\and
	Michigan State University, East Lansing, MI, USA,\\ 
	\email{\{dobeoyen,borzoo\}@msu.edu} 
	\and
	Technische Universit\"at Wien, Vienna, Austria.\\
	\email{ezio.bartocci@tuwien.ac.at}}
\maketitle              
\begin{abstract} 

Probabilistic hyperproperties express probabilistic relations between different executions of systems with uncertain behavior. \HyperPCTL \cite{hyperpctl} allows to formalize such properties, where quantification over probabilistic schedulers resolves potential non-determinism. In this paper we propose an extension named \AHyperPCTL to additionally introduce \emph{asynchronicity} between the observed executions by quantifying over {\em stutter-schedulers}, which may randomly decide to delay scheduler decisions by idling.
To our knowledge, this is the first asynchronous extension of a probabilistic branching-time hyperlogic.%
We show that \AHyperPCTL can express interesting information-flow security policies, and propose a model checking algorithm for a decidable fragment.	%

\end{abstract}

	\section{Introduction}
\label{sec:intro}

\label{app:classic}
Consider the following simple multi-threaded program~\cite{smith03} consisting of two threads with a secret input $h$ 
and a public output $l$: 
\[ 
th \colon \text{\bf{while }} h>0 \text{\bf{ do }} \{h \leftarrow h-1\};\; l \leftarrow 2 
\quad \parallel \quad
th' \colon l \leftarrow 1  
\]
Assuming that this program is executed under a probabilistic scheduler, the probability of observing $l=1$ decreases for increasing initial values of $h$.
Hence, this program does not satisfy scheduler-specific probabilistic observational determinism (SSPOD)~\cite{minh2013confidentiality}, which requires that no information about the private data is leaked through the publicly visible data, for any scheduling of the threads.
In fact, the scheduler is creating a probabilistic side channel that leaks the value of the secret.
Probabilistic hyperlogics such as \HyperPCTL~\cite{hyperpctl,Dobe22a,atva20} and \PHL~\cite{dft20} are able to express and verify requirements such as SSPOD.

Interestingly, there is a way to mitigate this side channel similar to the padding mechanism that counters timing side channels. 
In the above example, for any two executions of the program under the same scheduler with different initial $h$, we can find \emph{stuttering variations} of the program such that the probability of reaching any specific final value of $l$ is the same for both executions.
For example, for two different values of $h$, say $h_1$ and $h_2$, where $h_1 <h_2$, letting thread $th'$ initially stutter $(h_2 - h_1)$ times (i.e., repeating the current state) in the execution starting with $h_1$ will equalize the probability of reaching $l=1$.

While there have been efforts to incorporate stuttering semantics in non-probabilistic logics (e.g., \AHLTL~\cite{bcbfs21}), in the probabilistic setting, neither \HyperPCTL nor \PHL allow reasoning about stuttering behaviors, i.e., their semantics are ``synchronous'' in the sense that all computation trees are evaluated in lock-step.
In this paper, we propose an asynchronous extension of~\HyperPCTL that allows to reason about stuttering computations and whether we can find stuttering variations of programs such that a probabilistic hyperproperty is satisfied.

\paragraph{Related Work}
\label{sec:related}

\HyperPCTL~\cite{hyperpctl} was the first logic for specifying probabilistic hyperproperties over DTMCs, by providing state-based quantifiers over the computation trees of the DTMC.
This logic was  further extended\cite{dabb21,Dobe22a,Dobe22reward,atva20,abbd20} with the possibility to specify 
quantifiers over schedulers for model checking Markov decision processes (MDPs). 
The probabilistic hyperlogic \PHL~\cite{dft20} can also handle analysis of MDPs. 
In general, the (exact) model checking problem
for both \HyperPCTL and \PHL is undecidable unless we restrict the class of schedulers or we rely on some
approximating methods. 
HyperPCTL*~\cite{0044NBP21,0044ZBP19} extends PCTL*~\cite{Baier2008} with quantifiers over execution paths and is employed in statistical model checking. 
All three logics are synchronous and lock-step. 
To the best of our knowledge, our work is the first to consider asynchronicity in the probabilistic setting. 

In the non-probabilistic setting, asynchronicity has already been studied \cite{BaumeisterCBFS21,Hsu2023,BozzelliPS21,BozzelliPS22,Beutner2022,BartocciFHNC22}.
In~\cite{BartocciFHNC22}, the authors study the expressivity of \HyperLTL~\cite{cfkmrs14} showing the impossibility  
to express the ``two-state local independence'' asynchronous hyperproperty, where information flow is allowed 
only after a change of state (for example in the case of declassification~\cite{SabelfeldS09}). To cope with this limitation, 
several asynchronous extensions of \HyperLTL have been proposed.

For example, \emph{Asynchronous HyperLTL}~\cite{Beutner2022} extends HyperLTL with quantification over ``trajectories'' that enable the alignment of execution traces from different runs. 
\emph{Stuttering HyperLTL}~\cite{BozzelliPS21} relates only stuttering traces where two consecutive 
observations are different. 
\emph{Context HyperLTL}~\cite{BozzelliPS21} instead allows to combine synchronous and 
asynchronous hyperproperties. All three logics are in general undecidable, but there are useful decidable fragments 
that can be model-checked. The expressiveness of these logics has been compared in \cite{BozzelliPS22}.

\paragraph{Contributions}
Our main contribution is a new logic, called \AHyperPCTL, which is an asynchronous extension of \HyperPCTL and allows to reason about probabilistic relations between stuttering variations of probabilistic and potentially non-deterministic systems.
To our knowledge, this is the first asynchronous extension of a probabilistic branching-time hyperlogic.
Our goal is to associate several executions with independent stuttering variations of the same program and compare them.
We implement this by extending \HyperPCTL with quantification over \emph{stutter-schedulers}, which specify when the program should stutter.

We show that \AHyperPCTL is useful to express whether information leaks can be avoided 
via suitable stuttering.
In the context of our introductory example, the following \AHyperPCTL formula expresses SSPOD under stuttering:
\begin{align*}
	\forall \varsched .\; 
	&\forall \variable(\varsched) .\; \forall \variable'(\varsched) .\;
	\exists \varssched(\variable) .\; \exists \varssched'(\variable') .\; \\
	&\left(h_{\varssched} {\neq} h_{\varssched'} \wedge \textit{init}_{\varssched} \wedge \textit{init}_{\varssched'}\right)
	\implies
	( \textstyle
	\bigwedge_{k \in \{1,2\}} \pr\big(\F(l{=}k)_{\varssched}\big) = \pr\big(\F(l{=}k)_{\varssched'}\big)
	),
\end{align*}
where $\varsched$ represents a probabilistic scheduler that specifies which thread is allowed to execute in which program state, $\variable$ and $\variable'$ represent initial states, and $\varssched$ and $\varssched'$ are stutter-scheduler variables for the computation trees rooted at $\variable$ and $\variable'$ under the scheduler $\varsched$.
This formula specifies that under any probabilistic scheduling $\varsched$ of the two threads, if we consider two computation trees 
starting in states $\variable$ and $\variable'$ with different values for the secret variable $h$, there should exist stutterings for the two experiments such that the probabilities of observing any specific final value of $l$ are the same for both.

We propose a model checking algorithm for \AHyperPCTL under restrictions on the classes of schedulers and stutter-schedulers.
Our method generates a logical encoding of the problem in real arithmetic and uses a satisfiability modulo theories (SMT) solver, namely Z3~\cite{dmb08}, to determine the truth of the input statement. 
We experimentally demonstrate that the model checking problem for asynchronous probabilistic hyperproperties is a computationally highly complex synthesis problem at two levels: both for synthesizing scheduler policies and for synthesizing stutter schedulers.
This poses serious problems for model checking: our current implementation does not scale beyond a few states. We discuss some insights about this complexity and suggest possible future directions.

\paragraph{Organization} 
We discuss preliminary concepts in Sec.~\ref{sec:prelim} and introduce \AHyperPCTL in Sec.~\ref{sec:logic}. We dedicate Sec.~\ref{sec:app} to applications and Sec.~\ref{sec:algo} to our algorithm. We discuss results of our prototype implementation in Sec.~\ref{sec:eval}. 
We conclude in Sec.~\ref{sec:conclusion} with a summary and future work.

	\section{Preliminaries}
\label{sec:prelim}

We denote the real (non-negative real) numbers by $\R$ ($\R_{\geq 0}$), and the natural numbers including (excluding) $0$ by $\N$ ($\N_{> 0}$). 
For any $n \in \N$, we define $[n]$ to be the set $\{0,\ldots, n{-}1\}$.
We use $()$ to denote the empty tuple and $\circ$ for concatenation.

\begin{definition}
\label{def:dtmc}
	A \emph{discrete-time Markov chain (DTMC)} is a tuple $\dtmc {=} (\states,\AP, L, \P)$ where (1) $\states$ is a non-empty finite set of \emph{states}, (2) $\AP$ is a set of \emph{atomic propositions}, (3) $L\colon \states \rightarrow 2^{AP}$ is a \emph{labeling function} and (4) $\P \colon \states \times \states \rightarrow [0,1]$ is a \emph{transition probability function} such that $\sum_{\state' \in \states} \P(\state, \state') =1$ for all $\state \in \states$.
\end{definition}
An \emph{(infinite) path} is a sequence $\state_0\state_1\state_2\ldots \in \states^\omega$ of states with $\P(\state_i, \state_{i+1}) > 0$ for all $i \geq 0$. Let $\Paths{\state}{\dtmc}$ denote the set of all paths of $\dtmc$ starting in $\state \in \states$, and $\fPaths{\state}{\dtmc}$ denote the set of all non-empty finite prefixes of paths from $\Paths{\state}{\dtmc}$, which we call \emph{finite paths}. For a finite path $\xpath =\state_0\ldots s_k \in \fPaths{\state_0}{\dtmc}$, $k \geq 0$, we define $|\xpath|=k$. A state $t \in \states$ is \emph{reachable} from $\state \in \states$ if there exists a finite path in $\fPaths{\state}{\dtmc}$ that ends in $t$.
The \emph{cylinder set} $\Cyl^{\dtmc}(\xpath)$ of a finite path $\xpath$ is the set of all infinite paths with $\xpath$ as a prefix. The
\emph{probability space} for $\dtmc$ and $\state\in\states$ is \\[1ex]
$
\hspace*{15ex} \big(\Paths{\state}{\dtmc},\big\{\bigcup_{\xpath\in R}\Cyl^{\dtmc}(\xpath)\suchthat
R\subseteq\fPaths{\state}{\dtmc}\big\},\Pr^{\dtmc}_{\state}\big)\ ,
$\\[1ex]
where the \emph{probability} of the cylinder set of $\xpath\in \fPaths{\state}{\dtmc}$ is
$\Pr^{\dtmc}_{\state}(\Cyl^{\dtmc}(\xpath))=\Pi_{i=1}^{|\xpath|}\P(\xpath_{i{-}1},\xpath_{i})$. These concepts have been discussed in detail in~\cite{Baier2008}.

\emph{Markov decision processes} extend DTMCs to allow the modeling of environment interaction or user input in the form of non-determinism.
\begin{definition}
\label{def:mdp}
	A \emph{Markov decision process (MDP)} is defined as a tuple $\mdp = (\states, \AP, L, \Act, \P)$, where 
	(1) $\states$ is a non-empty finite set of \emph{states}, 
	(2) $\AP$ is a set of \emph{atomic propositions}, 
	(3) $L \colon S \to 2^{\AP}$ is a \emph{labeling function}, 
	(4) $\Act$ is a non-empty finite set of \emph{actions}, 
	(5) $\P \colon \states \times \Act \times \states \rightarrow [0,1]$ is a \emph{transition probability function} such that for all $\state \in \states$ the set of its \emph{enabled actions}
	\[	\textstyle
		\Act(\state)= \big\{\action\in\Act \suchthat \sum_{\state' \in \states} \P(\state, 
		\action, \state') =1\big\}
	\]
	 is non-empty and 
	$\sum_{\state' \in \states} \P(\state, \alpha, \state') = 0$ for all $\alpha \in \Act \setminus \Act(s)$.
\end{definition}

Let $\mdps$ be the set of all MDPs.
For every execution step of an MDP, a \emph{scheduler} resolves the non-determinism by selecting an enabled action to be executed.

\begin{definition} 
	\label{def:scheduler}
	For an MDP $\mdp = (\states, \AP, L, \Act, \P)$, a \emph{scheduler} is a tuple $\scheduler = (\modes, \modef, \start, \act)$, where
	(1) $\modes$ is a non-empty countable set of \emph{modes}, 
	(2) $\modef \colon \modes \times \states \rightarrow \modes$ is a \emph{mode transition function}, 
	(3) $\start \colon \states \rightarrow \modes$ selects the \emph{starting mode} $\start(s)$ for each state of $s\in \states$, 
	and (4) $\act \colon \modes \times \states \times \Act \to [0,1]$ is a function with
	$\sum_{\alpha \in \Act(s)} \act(q,s,\alpha) = 1$ and $\sum_{\alpha \in \Act \setminus \Act(s)} \act(q,s,\alpha) = 0$
	for all $\state \in \states$ and $\mode \in \modes$.
\end{definition}

We use $\schedulers{\mdp}$ to denote the set of all schedulers for an MDP $\mdp$.
A scheduler is called \emph{finite-memory} if $\modes$ is finite, \emph{memoryless} if $\modes$ is a singleton, and 
\emph{deterministic} if $\act(\mode,\state,\alpha) \in \{0,1\}$ for all $(\mode,\state,\alpha) \in \modes \times \states \times \Act$.
If a scheduler is memoryless, we sometimes omit its only mode.

	\section{Asynchronous HyperPCTL}
\label{sec:logic}

\emph{Probabilistic hyperproperties} specify probabilistic relations between different executions of one or several probabilistic models. In previous work, we introduced \HyperPCTL~\cite{hyperpctl} to reason over non-determinism~\cite{Dobe22a} and rewards~\cite{Dobe22reward} for \emph{synchronous} executions, i.e., where all executions make their steps simultaneously. In this work, we propose an extension to reason about \emph{asynchronous} executions, where 
some of the executions may also \emph{stutter} (i.e., stay in the same state without observable changes) while others execute.

This is useful if, for example, the duration of some computations depend on some secret input and we thus might wish to make the respective duration unobservable.
A typical application area are multi-threaded programs, like the one presented in Section \ref{sec:intro}, where we want to relate the executions of the different threads. If there is a single processor available, each execution step allows one of the threads to execute, while the others idle. The executing thread, however, might also decide to stutter in order to hide its execution duration. To be able to formalize such behavior, the decision whether an execution stutters or not must depend not only on the history but also on the chosen action (in our example, corresponding to which of the threads may execute). 

In this section, we first introduce a novel scheduler concept that supports stuttering, followed by the extension of \HyperPCTL that we henceforth refer to as \AHyperPCTL. 
To improve readability, we assume that all executions run in the same MDP; an extension to different MDPs is a bit technical but straightforward.

\subsection{Stutter Schedulers}
\label{sec:stuttersched}

We define a \emph{stutter-scheduler} as an additional type of scheduler that only distinguishes between stuttering, represented by $\stutteract$, or proceeding, represented by $\notstutact$, for every state $\state \in \states$ and action $\alpha \in \Act$. 

\begin{definition} 
	\label{def:stutterscheduler}
	A \emph{stutter-scheduler} for an MDP $\mdp= (\states, \AP, L, \Act, \P)$ 
	is a tuple $\ssched = (\Mstutter[\modes], \Mstutter[\modef], \Mstutter[\start], \Mstutter[\act])$ where (1) $\Mstutter[\modes]$ is a non-empty countable set of \emph{modes}, (2) $\Mstutter[\modef] \colon \Mstutter[\modes] \times \states \times \Act \to \Mstutter[\modes]$ is a \emph{mode transition function}, (3) $\Mstutter[\start] \colon \states \to \Mstutter[\modes]$ is a function selecting a \emph{starting mode} $\start(s)$ for each state $s\in\states$ and (4) $\Mstutter[\act] \colon \Mstutter[\modes] \times \states \times \Act \times \{\stutteract, \notstutact\} \to [0,1]$ is a function with
	\[
	\Mstutter[\act](\Mstutter[\mode],s,\alpha, \stutteract) + \Mstutter[\act](\Mstutter[\mode],s,\alpha, \notstutact) = 1
	\]
	for all $(\Mstutter[\mode], \state, \alpha)  \in \Mstutter[\modes] \times \states \times \Act$.
\end{definition}		

We use $\sSched$ to denote the set of all stutter-schedulers for an MDP $\mdp$.
When reasoning about asynchronicity, we consider an MDP $\mdp$ in the context of a scheduler and a stutter-scheduler for $\mdp$.
At each state, first the scheduler chooses an action $\alpha$, followed by a decision of the stutter-scheduler whether to execute $\alpha$ or to stutter (i.e., stay in the current state). 
Thus, a stutter-scheduler makes its decisions based on not only its mode and the MDP state, but also depending on the action chosen by the scheduler.	

\begin{definition}
	\label{def:stutterinduceddtmc}
	For an MDP $\mdp= (\states, \AP, L, \Act, \P)$, a scheduler $\scheduler$ for $\mdp$ and a stutter-scheduler $\ssched$ for $\mdp$, the \emph{DTMC induced by $\scheduler$ and $\ssched$} is defined as $\Dstutter = \DTMCtuple[\Dstutter]$, where
	$\Dstutter[\states] = \modes \times \Mstutter[\modes] \times \states$,
	$\Dstutter[L](\mode,\Mstutter[\mode],\state) = L(\state)$ and 
	$\Dstutter[\P]((\mode,\Mstutter[\mode],\state), (\mode',\Mstutter[\mode]',\state')) =
	\begin{cases}
	  \textit{stut} & \textit{if } q'=q
          \neq \modef(q,s)
          \wedge \state'=\state \\
		\textit{cont} & \textit{if } \mode' = \modef(\mode,\state) \wedge (\mode' \neq \mode \vee \state' \neq \state)\\
		\textit{stut} + \textit{cont} & \textit{if } \mode'= \mode  = \modef(\mode,\state) \wedge \state'= \state
		\\
		0 & \textit{otherwise}  
	\end{cases}%
	$ 
	\begin{align*}
		&\textit{with } 
		\textit{stut} = \textstyle\sum_{\alpha \in \Act, \Mstutter[\modef](\Mstutter[\mode],\state,\alpha) = \Mstutter[\mode]'} \act(q,s, \alpha) \cdot \Mstutter[\act](\Mstutter[\mode],\state,\alpha, \stutteract) \\
		&\textit{and }\textit{cont} = \textstyle\sum_{\alpha \in \Act, \Mstutter[\modef](\Mstutter[\mode],\state,\alpha) = {\Mstutter[\mode]}'} \act(\mode,\state, \alpha) \cdot\Mstutter[\act](\Mstutter[\mode],\state,\alpha, \notstutact) \cdot \P(\state,\alpha, \state')\ .
	\end{align*}	
\end{definition}
The properties \emph{finite-memory}, \emph{memoryless}, and \emph{deterministic} for stutter schedulers are defined analogously as for schedulers.
A stutter-scheduler $\ssched$ for an MDP $\mdp$ is \emph{fair} for a scheduler $\sched \in \schedulers{\mdp}$ if the probability of taking infinitely many consecutive stuttering steps is 0. 
The different executions, whose relations we want to analyze, will be evaluated in the \emph{composition} of the induced models. 
The composition we use is the standard product of DTMCs with the only difference that we annotate atomic propositions with an index, indicating the execution in which they appear. 

\begin{definition}
  For $n\in\N$ and DTMCs $\dtmc_1, \ldots, \dtmc_n$ with $\dtmc_i=(\states_i, \AP_i, L_i, \P_i)$ for $i=1,\ldots,n$, we define the \emph{composition} $\dtmc_1\times\ldots\times\dtmc_n$ to be the DTMC $\dtmc=(\states,\AP, L, \P)$ with $\states=\states_1\times\ldots\times\states_n$, $\AP=\cup_{i=1}^{n}\{a_i\,|\,a\in\AP_i\}$, $L(s_1,\ldots,s_n)=\cup_{i=1}^n\{a_i\,|\,a\in L_i(s_i)\}$ and $P((s_1,\ldots,s_n),(s_1',\ldots,s_n'))=\Pi_{i=1}^n P_i(s_i,s_i')$.
\end{definition}

\begin{definition}
	For an MDP $\mdp$, $n\in\N_{> 0}$,
	a tuple $\schedtup = (\sched_1, \ldots, \sched_n) \in (\schedulers{\mdp})^n$ of schedulers, 
	and a tuple $\sschedtup = (\ssched_1, \ldots, \ssched_n) \in (\sSched[\mdp])^n$ of stutter-schedulers, 
	we define the \emph{induced DTMC} $\ext[\mdp][\schedtup][\sschedtup]=\ext[\mdp][\sched_1][\ssched_1] \times \ldots \times \ext[\mdp][\sched_n][\ssched_n]$.
\end{definition}

Later we will make use of \emph{counting} stutter-schedulers. These are deterministic bounded-memory stutter-schedulers which specify
for each state $\state\in\states$ and action $\alpha \in \Act(\state)$ a stuttering duration $j_{s, \alpha}$. Intuitively, $j_{s, \alpha}$ determines how many successive stutter-steps need to be made in state $s$ before $\alpha$ can be executed.

\begin{definition} 
	\label{def:continhstutterscheduler}
	An \emph{$m$-bounded counting} stutter-scheduler for an MDP $\mdp= (\states, \AP, L, \Act, \P)$ and $m\in\N_{> 0}$
	is a stutter-scheduler $\ssched = ([m], \Mstutter[\modef], \Mstutter[\start], \Mstutter[\act])$ such that for all $s\in\states$ and $\alpha\in\Act(s)$ there exists $j_{s,\alpha}\in[m]$ with (1) $\Mstutter[\start](s){=}0$ and (2) for each $j\in[m]$, if $j<j_{s,\alpha}$ then $\Mstutter[\modef](j,s,\alpha){=}j+1$ and $\Mstutter[\act](j,s,\alpha,\stutteract)=1$, and otherwise (if $j\geq j_{s,\alpha}$) $\Mstutter[\modef](j,s,\alpha)=0$ and $\Mstutter[\act](j,s,\alpha,\notstutact)=1$.
\end{definition}		

\begin{example}
	\label{ex:ssched}
	Consider the MDP $\mdp$ from Figure~\ref{fig:mc_illustr} as well as the DTMC $\ext$ induced by a probabilistic memoryless scheduler $\sched$ with $\sched(s_0, \alpha) = p$, $\sched(s_0, \beta) = 1-p$ for some $p \in [0,1]$ 
	and a 3-counting stutter-scheduler $\ssched$ on $\mdp$ with $j_{s_0, \alpha}=2$, $j_{s_0, \beta}=0$.
	The modes $q\in [3]$ of $\ssched$ store how many times we have stuttered since actually executing the last action. 
	For each state $(s,j)$ of $\ext$, first $\sched$ chooses an action $\alpha \in \Act(\state)$ probabilistically, and then $j$ is compared with the stuttering duration $j_{s,\alpha}$ stipulated by $\ssched$. 
	If we choose $\beta$ in state $(s_0, 0)$, then we move to a $\beta$-successor of $s_0$. 
	However, if we choose $\alpha$, then we move to the state $(s_0,1)$ and then choose an action again. If we choose $\beta$ at $(s_0,1)$, then we move to a $\beta$-successor of $s_0$, but if we choose $\alpha$ then we move to $(s_0,2)$.
	In particular, choosing $\alpha$ at $(s_0,0)$ does not mean that we  stutter twice in $s_0$. We stutter twice only if we also choose $\alpha$ in $(s_0,1)$.
	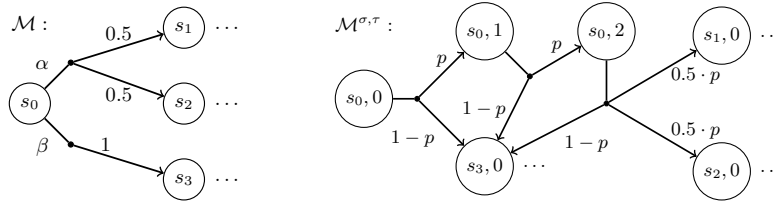
\begin{figure}[t]
	\centering
	\begin{subfigure}{0.32\textwidth}
		\centering 
		\scalebox{0.85}{
	\begin{tikzpicture}[->,line width=0.5pt,node distance=1.75cm]
		\tikzset{st/.style={draw,circle}} 
		\tikzset{ap/.style={font={\footnotesize},align=center}}
		\tikzstyle{dist}=[circle, inner sep=0.8pt, solid, draw=black,fill=black]
		\tikzset{trans/.style={font={\footnotesize},thick}}
		\node[st] (0) {$s_0$};
		\node[dist, below right=15pt of 0] (d0) {};
		\node[dist, above right=15pt of 0] (d1) {};
		%
		\node[st, right=of 0] (2) {$s_2$};
		\node[st, above=15pt of 2] (1) {$s_1$};
		\node[st, below=15pt of 2] (3) {$s_3$};
		\node[right=1pt of 2] (dots2) {\ldots};
		\node[right=1pt of 1] (dots1) {\ldots};
		\node[right=1pt of 3] (dots3) {\ldots};
		\node[above=20pt of 0] (title) {$\mdp:$};
		\path[trans]
		(0) edge [-] node[pos=0.5, above left] {$\alpha$} (d1)
		(d1) edge [] node[below] {$0.5$} (2)
		(d1) edge [] node[above] {$0.5$} (1)
		(0) edge [-] node[pos=0.5, below left] {$\beta$} (d0)
		(d0) edge [] node[above left] {$1$} (3);
	\end{tikzpicture}
		}
	\end{subfigure}
	\begin{subfigure}{0.60\textwidth}
		\centering
		\scalebox{0.8}{
	\begin{tikzpicture}[->,line width=0.5pt,node distance=1.75cm]
		\tikzset{st/.style={draw,circle}} 
		\tikzset{ap/.style={font={\footnotesize},align=center}}
		\tikzstyle{dist}=[circle, inner sep=0.8pt, solid, draw=black,fill=black]
		\tikzset{trans/.style={font={\footnotesize},thick}}
		\node[st] (00) {$s_0, 0$};
		\node[dist, right=10pt of 00] (d00) {};
		\node[st, above right=30pt of d00] (01) {$s_0, 1$};
		\node[dist, below right=15pt of 01] (d01) {};
		\node[st, below right=30pt of d00] (30) {$s_3, 0$};
		\node[st, right=30pt of 01] (02) {$s_0, 2$};
		\node[dist, below=19pt of 02] (d02) {};
		\node[right= of d02] (dd02) {}; 
		\node[st, below=15pt of dd02] (20) {$s_2, 0$};
		\node[st, above=15pt of dd02] (10) {$s_1, 0$};
		\node[right=1pt of 20] (dots2) {\ldots};
		\node[right=1pt of 10] (dots1) {\ldots};
		\node[right=1pt of 30] (dots3) {\ldots};
		\node[above=15pt of 00] (title) {$\ext:$};
		%
		\path[trans]
		(00) edge [-] node[pos=0.5, above] {} (d00)
		(d00) edge [] node[above] {$p$} (01)
		(d00) edge [] node[below left] {$1-p$} (30)
		(01) edge [-] node[pos=0.5, above] {} (d01)
		(d01) edge [] node[above] {$p$} (02)
		(d01) edge [] node[left] {$1-p$} (30)
		(02) edge [-] node[pos=0.5, above] {} (d02)
		(d02) edge [] node[right] {\;\;$0.5 \cdot p$} (10)
		(d02) edge [] node[right] {\;\;$0.5 \cdot p$} (20)
		(d02) edge [] node[below right] {$1-p$} (30);
	\end{tikzpicture}%
	}
	\end{subfigure}
	
	\caption{
		The interplay of a probabilistic memoryless scheduler $\sigma$ and a counting stutter-scheduler $\tau$ from Ex.~\ref{ex:ssched}. (The mode of $\sigma$ is omitted.)}
	\label{fig:mc_illustr}
\end{figure}

\end{example}

\subsection{Syntax}

To formalize relations of different executions, we begin an \AHyperPCTL formula as in \HyperPCTL \cite{Dobe22a} by first quantifying over the possible schedulers and then over the states of the MDP in which the respective execution under the chosen scheduler starts. 
In contrast to \cite{Dobe22a}, we now additionally quantify over stutter-schedulers in dependence on the chosen schedulers and initial states. 
Hence, the \emph{non-quantified} part of an \AHyperPCTL formula is evaluated on the DTMC(s) induced by not only the schedulers but also the stutter-schedulers, in accordance with Def.~\ref{def:stutterinduceddtmc}. 
Formally, we inductively define \AHyperPCTL scheduler-quantified formulas as follows:
\[
\begin{array}{lll}
	\mathit{\small scheduler-quantified \colon}
    & \psched &::= 
	\forall \variable[\sched] . \psched \mid 
	\exists \variable[\sched] . \psched \mid 
	\pstate 
	\\
	\mathit{\small state-quantified \colon}
	&\pstate &::= 
	\forall \hat{s}(\variable[\sched]) . \pstate \mid 
	\exists \hat{s}(\variable[\sched]) . \pstate \mid 
	\pstutt
	\\
	
	\mathit{\small stutter-quantified \colon}
	&\pstutt &::= 
	\forall \hat{\ssched}(\hat{s}) . \pstutt \mid 
	\exists \hat{\ssched}(\hat{s}) . \pstutt \mid 
	\pnonquant 
	\\
	
	\mathit{\small non-quantified \colon}
    &\pnonquant &::= 
	\texttt{true} \mid
	a_{\variable[\ssched]} \mid 
	\pnonquant \wedge \pnonquant \mid 
	\neg \pnonquant \mid 
	\pprob  \sim \pprob
	\\
	
	\mathit{\small probability\; expression \colon} &\pprob &::= 
	\pr(\ppath) \mid 
	f(\pprob_1, \ldots, \pprob_k) 
	\\
	
	\mathit{\small path\; formula\colon } &\ppath &::= 
	\Next \pnonquant \mid 
	\pnonquant \U \pnonquant 
\end{array}
\]
where $\varsched$ is a \emph{scheduler variable} from an infinite set $\variable[\Sigma]$, $\varstate$ is a \emph{state variable} from an infinite set $\variable[S]$, $\varssched$ is a \emph{stutter-scheduler variable} from an infinite set $\variable[\Tau]$,
$a \in \AP$ is an atomic proposition, $\sim \in \{\leq, <, =, \not=, >, \geq\}$, and
$f \colon [0,1]^k \to \R$ is a $k$-ary arithmetic operator over probabilities, where a constant $c$ is viewed as a $0$-ary function. 
$\pr$ refers to the probability operator and `$\Next$', `$\U$' refer to the temporal operators `next' and `until', respectively.

An \AHyperPCTL scheduler-quantified formula $\psched$ is \emph{well-formed} if each occurrence of any $a_{\varssched}$ for $\varssched \in \variable[\Tau]$ is in the scope of a {stutter quantifier} for $\varssched(\varstate)$ for some $\varstate \in \variable[S]$,  any quantifier for $\varssched(\varstate)$ is in the scope of a \emph{state quantifier} for $\varstate(\varsched)$ for some $\varsched \in \variable[\Sigma]$, 
and any quantifier for $\varstate(\varsched)$ is in the scope of a \emph{scheduler quantifier} for $\varsched$. \AHyperPCTL \emph{formulas} are well-formed \AHyperPCTL scheduler-quantified formulas, where we additionally allow standard syntactic sugar: $\fals = \neg 
\tru$, $\varphi_1 \vee \varphi_2 = \neg(\neg \varphi_1 \wedge \neg\varphi_2)$, $\varphi_1 \implies \varphi_2 = \neg( \varphi_1 \wedge \neg\varphi_2)$, $\F \varphi = \tru \U \varphi$, and $\pr(\G \varphi) = 1-\pr(\F \neg\varphi)$.

\newpage
\begin{example}
The  well-formed \AHyperPCTL formula
\[\exists \varsched .\; \forall \varstate(\varsched) .\; \forall \varstate'(\varsched) .\; \exists \varssched(\varstate) .\; \exists \varssched'(\varstate') .\; \big(\textit{init}_{\varssched} \wedge \textit{init}_{\varssched'}\big) \implies \big(\pr(\Finally a_{\varssched}) = \pr(\Finally a_{\varssched'})\big)\] 
states that there exists an assignment for $\varsched$, such that, if we start two independent \emph{experiments} from any state assignment to $\varstate$ and $\varstate'$, there exist independent possible stutter-schedulers for the two experiments, under which the probability of reaching a state labeled $a$ is equal, provided the initial states of the experiments are labeled $\textit{init}$. 
Further examples will be provided in Sec. \ref{sec:app}.
\end{example}

We restrict ourselves to quantifying first over schedulers, then over states, and finally over stutter-schedulers. 
This choice is a balance between the expressivity required in our applications and understandable syntax and semantics. 
Note that different state variables can share the same scheduler, but they cannot share the same stutter-scheduler.
Further, several different quantified stutter-schedulers in a formula are not allowed to depend on each other.

\subsection{Semantics}

The semantic judgement rules for \AHyperPCTL closely mirror the rules for \HyperPCTL \cite{atva20}.
\AHyperPCTL state formulas are evaluated in the context of an MDP $\mdp$, 
a sequence $\schedtup \in (\schedulers{\mdp})^{\numstutter}$ of schedulers, 
a sequence $\sschedtup \in (\sSched)^{\numstutter}$ of stutter-schedulers and 
a sequence $\statetup \in \ext[S][\schedtup][\sschedtup]$ of $\ext[][\schedtup][\sschedtup]$-states.
The length $\numstutter$ of these tuples corresponds to the number of stutter-schedulers in the given formula, which determines how many experiments run in parallel.
The elements of these tuples are instantiations of the corresponding variables in the formula.
We assume the stutter-schedulers to be fair and the variables used to refer to each of these quantifiers in the formula to be unique to avoid ambiguity. In the following, we use $\mathbb{Q}$ to refer to quantifiers $\{\forall, \exists\}$. We recursively evaluate the formula by instantiating the quantifier variables with concrete schedulers, states, and stutter-schedulers, and store them in sequences $\evaltups$. We begin by initializing each of these sequences as empty.
An MDP $\mdp=(\states, \AP, L, \Act, \P)$ satisfies an \AHyperPCTL formula $\phi$, denoted by 
$\mdp \models \varphi$, iff $\mdp, (), (), () \models \varphi$.

When instantiating a scheduler quantifier $\mathbb{Q} \variable[\sched] . \phi$ by a scheduler $\sigma$, we syntactically replace all occurrences of $\variable[\sched]$ in $\phi$ by $\sched$ and denote the result by $\varphi[\variable[\sched] \rightsquigarrow \sched]$\footnote{Note that we substitute a syntactic element with a semantic object in order to reduce notation; alternatively one could store respective mappings in the context.}.
The instantiation of a state quantifier $\mathbb{Q} \variable(\variable[\sched]) . \phi$ works similarly but it also remembers the respective scheduler: $\varphi[\hat{s} \rightsquigarrow s_{\sched}]$ denotes the result of syntactically replacing all occurrences of $\variable[\state]$ in $\phi$ by $s_{\sched}$. 
Finally, for instantiating the $\numstutter^{th}$ stutter-scheduler quantifier $\mathbb{Q} \variable[\ssched](\state_{\sched}) . \phi$, we replace
all occurrences of $a_{\variable[\ssched]}$ by $a_{\numstutter}$ and denote the result by $\varphi[\hat{\ssched} \squigm \ssched]$.
The semantics judgement rules for quantified and non-quantified state formulas, as well as probability expressions are defined as follows:
\[
\begin{array}{lcl}
	\mdp, \evaltups \models \forall \varsched .\ \varphi & \iff 
	& \forall \sched \in \schedulers{\mdp} .\ \mdp, \evaltups \models \varphi [\varsched \rightsquigarrow \sched]  \\
	\mdp, \evaltups \models \exists \varsched .\ \varphi & \iff 
	& \exists \sched \in \schedulers{\mdp} .\ \mdp, \evaltups \models \varphi [\varsched \rightsquigarrow \sched]  \\
	%
	\mdp, \evaltups \models \forall \varstate(\sched) .\ \varphi & \iff 
	& \forall s \in S .\ \mdp, \evaltups \models \varphi [\varstate \rightsquigarrow s_{\sched}]  \\
	\mdp, \evaltups \models \exists \varstate(\sched) .\ \varphi & \iff 
	& \exists s \in S .\ \mdp, \evaltups \models \varphi [\varstate \rightsquigarrow s_{\sched}] \\
	\mdp, \evaltups \models \forall \varssched(s_{\sched}) .\ \varphi & \iff 
	& \forall \ssched \in \sSched[\mdp] .\ \mdp, 
	\schedtup \circ \sched, 
	\statetup \circ (\start(s), \Mstutter[\start](s), s),\\
	&& \phantom{\forall  \ssched \in \sSched[\mdp] .\ }	\sschedtup \circ \ssched\models \varphi [\varssched \squigm \ssched] \\
	\mdp, \evaltups \models \exists \hat{\ssched}(s_{\sched}) .\ \varphi & \iff 
	& \exists \ssched \in \sSched[\mdp] .\ \mdp, 
	\schedtup \circ \sched, 
	\statetup \circ (\start(s), \Mstutter[\start](s), s), \\
	&& \phantom{\exists  \ssched \in \sSched[\mdp] .\ }\sschedtup \circ \ssched	\models \varphi [\varssched \squigm \ssched] \\
	%
	\mdp, \evaltups \models \texttt{true} \\
	\mdp, \evaltups \models a_i & \iff 
	& a_i \in \ext[L][\schedtup][\sschedtup](\statetup) \\
	\mdp, \evaltups \models \varphi_1 \wedge \varphi_2 & \iff 
	& \mdp, \evaltups \models \varphi_1 \text{ and } \mdp, \evaltups \models \varphi_2 \\
	\mdp, \evaltups \models \neg \varphi & \iff 
	& \mdp, \evaltups \not\models \varphi \\ 	
	\mdp, \evaltups \models \pprob_1 < \pprob_2 & \iff 
	& \llbracket \pprob_1 \rrbracket_{\mdp, \evaltups} <  \llbracket \pprob_2 \rrbracket_{\mdp,  \evaltups} \\
	%
	%
	\llbracket \pr(\ppath) \rrbracket_{\mdp, \evaltups} & = 
	& \Probm[{\ext[\mdp][\schedtup][\sschedtup]}]
	(\{ \pi \in \Paths{\statetup}{\ext[\mdp][\schedtup][\sschedtup]}
	\mid \mdp, \evaltups[\pi]  \models \ppath \}) \\
	\llbracket f(\pprob_1, \ldots, \pprob_k) \rrbracket_{\mdp, \evaltups} & = 
	& f(\llbracket \pprob_1 \rrbracket_{\mdp, \evaltups}\;, \ldots, \llbracket \pprob_k \rrbracket_{\mdp, \evaltups}) \\
\end{array}
\]
where
the tuples $\schedtup$, $\statetup$ and $\sschedtup$ are of length $\numstutter{-}1$. The semantics of path formulas is defined as follows for a path $\pi= \statetup_0 \statetup_1 \ldots $ of $\ext[][\schedtup][\sschedtup]$ with $\statetup_i \in \ext[S][\sched_1][\ssched_1] \times \ldots \times \ext[S][\sched_n][\ssched_n]$:
\[
\begin{array}{lll}
	\mdp, \evaltups[\pi] \models \Next \varphi & \iff 
	& \mdp, \evaltups[\statetup_1] \models \varphi 
	\\  
	\mdp, \evaltups[\pi] \models \varphi_1 \U \varphi_2 & \iff 
	& \exists j \geq 0 .\ \big( \mdp, \evaltups[\statetup_j] \models \varphi_2\; \wedge \\
	&& \phantom{\exists j \geq 0 .\ ( } \forall i \in [0,j) .\ \mdp, \evaltups[\statetup_i] \models \varphi_1 \big) \\ 
\end{array}
\]

\begin{lemma}
	\AHyperPCTL is strictly more expressive than \HyperPCTL.
\end{lemma}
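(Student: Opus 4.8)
The plan is to prove the two directions separately: that \AHyperPCTL is at least as expressive as \HyperPCTL, and that this containment is strict.

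For the first direction I would exhibit, for every \HyperPCTL formula $\varphi$, an equivalent \AHyperPCTL formula. The guiding observation is that \HyperPCTL semantics is recovered by the \emph{never-stutter} stutter-scheduler $\ssched_0$, i.e. the memoryless deterministic stutter-scheduler with $\Mstutter[\act](\cdot,\cdot,\cdot,\notstutact)=1$ everywhere. Under $\ssched_0$ the quantity $\textit{stut}$ in Definition~\ref{def:stutterinduceddtmc} vanishes at every configuration, so $\ext[\mdp][\sched][\ssched_0]$ is, as a labeled DTMC, isomorphic to the plain $\sched$-induced DTMC used by \HyperPCTL (the auxiliary stutter-modes carry no labels and are bisimilar to their state components). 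Consequently the product $\ext[\mdp][\schedtup][(\ssched_0,\ldots,\ssched_0)]$ coincides with the synchronous \HyperPCTL product, and every non-quantified subformula, including $\Next$ and $\U$, evaluates identically. The translation then inserts, directly after each state quantifier $\mathbb{Q}\,\varstate(\varsched)$, a stutter quantifier for a fresh $\varssched(\varstate)$ and re-indexes the propositions accordingly.

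The delicate point of this direction is that the stutter quantifier ranges over all of $\sSched[\mdp]$, whereas the argument above needs it pinned to $\ssched_0$; since stuttering re-weights which enabled actions are eventually taken and thus changes even reachability probabilities, we cannot simply pick an arbitrary witness. I would therefore prove a definability lemma: $\ssched_0$ is the unique fair stutter-scheduler whose induced DTMC places no probability mass on a genuine stutter self-loop, and I would capture this by a non-quantified guard $\chi_{\varssched}$ so that $\exists\,\varssched(\varstate).\,(\chi_{\varssched}\wedge\cdots)$ (resp. $\chi_{\varssched}$ as an implication hypothesis under a universal quantifier) forces the never-stutter behavior. Showing that $\chi_{\varssched}$ is expressible using only label observations is the technical crux of the containment.

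For strictness I would take the existential-stutter reachability-equalization property of the Introduction, $\Psi = \exists\varsched.\forall\varstate(\varsched).\forall\varstate'(\varsched).\exists\varssched(\varstate).\exists\varssched'(\varstate').(\textit{init}_{\varssched}\wedge\textit{init}_{\varssched'})\implies \pr(\F a_{\varssched})=\pr(\F a_{\varssched'})$, whose power comes from re-timing the two experiments \emph{independently} under a shared scheduler $\varsched$. I would construct two MDPs $M_1,M_2$: in $M_1$ the two reachability profiles are unequal under every scheduler alone but can be equalized by independent stutter-schedulers, so $M_1\models\Psi$; in $M_2$ the profiles are structurally unequalizable (e.g.\ $a$ reachable from one initial state but not the other), so $M_2\not\models\Psi$. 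It remains to show $M_1$ and $M_2$ satisfy exactly the same \HyperPCTL formulas, which I would do by exhibiting a relation between their synchronous computation-tree semantics and proving, by induction on \HyperPCTL formula structure, that related configurations agree on all scheduler and state quantifiers, probability comparisons, and temporal operators: a lock-step bisimulation that is, by design, blind to the asynchronous re-timing distinguishing the two models. I expect this last step, the \HyperPCTL-indistinguishability of $M_1$ and $M_2$, to be the main obstacle, because \HyperPCTL already supports independent schedulers, nested probability operators and $\Next$, so the two models must be engineered so that no synchronous evaluation can detect the equalizability that $\Psi$ exploits, and making the bisimulation survive the scheduler quantifiers while still separating on $\Psi$ is the part requiring the most care.
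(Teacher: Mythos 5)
Your containment direction starts from the same core idea as the paper---pin the stutter quantifier to the never-stutter scheduler via an observable guard---but you attempt it on the \emph{unmodified} MDP, and that is exactly where the argument breaks. The guard $\chi_{\varssched}$ you postulate is in general not expressible, because the logic observes only labels: if two states $u,v$ satisfy $L(u)=L(v)$ and $u$ has a transition to $v$, then a stutter step at $u$ and the genuine move $u\to v$ produce identical label observations, so no non-quantified formula can separate the trivial stutter-scheduler from one that stutters at $u$; yet such stuttering does change values like $\pr(\Next\,\cdot)$, so a correct translation \emph{must} exclude it. The same problem arises with genuine self-loops, where even the trivial stutter-scheduler ``looks like'' it stutters. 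The paper's proof resolves precisely this obstacle by \emph{changing the model}: $\mdp'$ is obtained from $\mdp$ by unrolling every self-loop into a two-state loop and attaching a fresh unique label $a_s$ to each state $s$; after that, ``the probability of observing the same state label $a_s$ in the current and in the next step is always $0$'' is a sound and complete triviality test (the self-loop unrolling is what keeps it from wrongly rejecting the trivial scheduler), and the translation adds a universal stutter quantifier per state quantifier guarded by this test. Note that the lemma as proved in the paper only requires a reduction $(\mdp,\phi)\mapsto(\mdp',\phi')$ preserving satisfaction, so modifying the model is permitted; your formulation, which fixes the model and translates only the formula, is strictly harder, and the step you yourself flag as ``the technical crux'' is the one that fails in general. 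As it stands, the containment is therefore not established by your argument.

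On strictness, you and the paper genuinely diverge. The paper disposes of it in one line: \HyperPCTL has no stutter quantification, hence cannot express properties of stuttering variations. Your plan---exhibit $M_1,M_2$ agreeing on all \HyperPCTL formulas yet separated by the \AHyperPCTL formula $\Psi$---is the classically rigorous route for formula-level expressiveness, but it is entirely unexecuted, and the burden is heavier than you suggest: \HyperPCTL satisfaction with scheduler quantification is not even invariant under probabilistic bisimulation (a scheduler can treat bisimilar states differently, which the logic can detect by comparing experiments under a shared scheduler), so full \HyperPCTL-equivalence of two distinct finite MDPs is an extremely strong requirement, and it is not clear such a pair exists at all. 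So neither half of your proposal closes; the decisive missing idea relative to the paper is the model transformation (unique state labels plus self-loop unrolling) that makes triviality of stutter-schedulers observable inside the logic.
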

\begin{proof}[Sketch]
For every MDP $\mdp$ and \HyperPCTL formula $\phi$, we can construct an MDP $\mdp'$ and an \AHyperPCTL formula $\phi'$ such that $\mdp \models_{\HyperPCTL} \phi$ iff $\mdp' \models_{\AHyperPCTL} \phi'$. 
The MDP $\mdp'$ is constructed from $\mdp$ by transforming each self-loop to a two-state-loop and then adding a unique label $a_s$ to each state $s$.
For this MDP, we define a formula $\mathit{trivial}_{\varssched_1, \ldots, \varssched_m}$ that checks whether the given stutter-schedulers are trivial by requiring that the probability of seeing the same state label $a_s$ in the current and in the next step must always be 0.
We construct $\phi'$ by adding a universal stutter-quantifier for each state quantifier and requiring that if these stutter-schedulers are all trivial, then the original non-quantified formula must hold.

\AHyperPCTL is thus at least as expressive as \HyperPCTL, and since \HyperPCTL cannot express stutter quantification, \AHyperPCTL is strictly more expressive.
\end{proof}

Hence, since the model checking problem for \HyperPCTL is already undecidable \cite{hyperpctl}, it follows that \AHyperPCTL model checking is undecidable as well.

\begin{theorem}
	The \AHyperPCTL model checking problem is undecidable.
\end{theorem}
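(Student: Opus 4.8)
The plan is to obtain undecidability as an immediate corollary of the preceding expressiveness lemma, by packaging its construction as a \emph{many-one reduction} from \HyperPCTL model checking. Since the latter is already known to be undecidable~\cite{hyperpctl}, it suffices to exhibit a \emph{computable} map that sends every instance of the \HyperPCTL model checking problem to an equivalent instance of the \AHyperPCTL model checking problem, and then invoke the standard fact that undecidability transfers backwards along computable reductions.

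First I would recall the transformation already given in the proof of the lemma: from an MDP $\mdp$ and a \HyperPCTL formula $\phi$ it produces an MDP $\mdp'$ (obtained by splitting each self-loop into a two-state loop and attaching a fresh label $a_s$ to every state $s$) together with an \AHyperPCTL formula $\phi'$ (obtained by adding a universal stutter-quantifier for each state quantifier and guarding $\phi$ with the formula $\mathit{trivial}_{\varssched_1,\ldots,\varssched_m}$), and that this pair satisfies $\mdp \models_{\HyperPCTL} \phi \iff \mdp' \models_{\AHyperPCTL} \phi'$. The semantic correctness of this equivalence is exactly what the lemma supplies, so I would simply cite it rather than reprove it.

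The one genuinely new point to check is that this map is \emph{effective}, i.e.\ computable in finite time. For $\mdp \mapsto \mdp'$ this is clear: the construction inspects only the finite state set and the finite transition data of $\mdp$, duplicates finitely many self-looping states, and introduces one new atomic proposition per state. For $\phi \mapsto \phi'$ it is equally clear: one inserts a fixed number of quantifiers (one per state quantifier already present) and a single, syntactically determined guard $\mathit{trivial}_{\varssched_1,\ldots,\varssched_m}$. Both outputs are therefore computable from the inputs by a terminating, syntax-directed procedure.

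Finally I would close the reduction argument: if \AHyperPCTL model checking were decidable, then composing a decision procedure for ``$\mdp' \models_{\AHyperPCTL} \phi'$'' with the computable reduction above would decide ``$\mdp \models_{\HyperPCTL} \phi$'', contradicting the undecidability of \HyperPCTL model checking; hence \AHyperPCTL model checking is undecidable. I do not expect a real obstacle here—the substantive work lives entirely in the lemma—so the only care needed is to state the reduction cleanly and note its computability, which is routine given the finite, purely syntactic nature of the construction.
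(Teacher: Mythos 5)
Your proposal is correct and follows essentially the same route as the paper: the paper derives the theorem directly from the preceding expressiveness lemma, observing that its construction of $(\mdp',\phi')$ from $(\mdp,\phi)$ reduces \HyperPCTL model checking (known undecidable per \cite{hyperpctl}) to \AHyperPCTL model checking. Your additional remarks on the computability of the reduction make explicit what the paper leaves implicit, but the substance is identical.
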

	\newpage
\section{Applications of AHyperPCTL}
\label{sec:app}

\noindent\textbf{ACDB}\quad
\label{app:abcd}
Consider the code snippet \cite{guernicAutomatonbasedConfidentiality2007} in Fig.~\ref{fig:acbd}, where two threads synchronize across a critical region realized by the $\mathtt{await\ semaphore}$ command. Different interleavings of the threads can yield different sequences of observable outputs 
\begin{wrapfigure}[12]{r}{4cm}
		\vspace*{-5.5ex}
		\begin{lstlisting}[style=CStyle,tabsize=1,language=ML,basicstyle=\scriptsize,escapechar=/]
			/\color{magenta}Thread/ T1(){
				await semaphore{
					print(`a');
					v = v+1;
					print(`b');}
			}
			/\color{magenta}Thread/ T2(){
				print(`c');/\label{line:stutter_here}/
				if h=1{
					await semaphore{
						v = v+2;}}
				print(`d'); 
		}\end{lstlisting}
	\caption{\small Information leak.} 
	\label{fig:acbd}
\end{wrapfigure}
(i.e., permutations of $\mathtt{abcd}$).
Assume this program is executed according to a probabilistic scheduler.
Since the behavior of thread \texttt{T2} depends on a secret input \texttt{h},
under synchronous semantics, the program leaks information about the secret input: the probability of observing output sequence \texttt{acdb} is 0 if $\mathtt{h=0}$ and non-zero for $\mathtt{h=1}$.
However, stuttering after line~\ref{line:stutter_here} until \texttt{b} is printed would prevent an information leak. 
The following \AHyperPCTL formula expresses a variation of SSPOD, requiring that the probability of observing any specific output should be the same regardless of the value of \texttt{h}:
\begin{align*}
	\forall \varsched . \;&\forall \variable (\varsched).\; \forall \variable' (\varsched).\; \exists \varssched (\variable).\; \exists \varssched' (\variable').\  \\
	&\left(h_{\varssched} {\neq} h_{\varssched'} \wedge \textit{init}_{\varssched} \wedge \textit{init}_{\varssched'}\right) \textstyle
	\implies \left( \pr  \big(\square\bigwedge_{a \in \textit{obs}} \pr (\Next a_{\varssched}) = \pr (\Next a_{\varssched'})\big)= 1\right) 
\end{align*}

\noindent\textbf{Side-Channel Timing Leaks}\quad
\label{app:sidechannel}
are a kind of information leak where an attacker 
can infer the approximate 
value of the secret input based on the 
\begin{wrapfigure}[12]{r}{4.6cm}
	\vspace*{-6ex}
	\begin{lstlisting}[style=CStyle,tabsize=1,language=ML,basicstyle=\scriptsize,escapechar=/]
		void mexp(){
			c = 0; d = 1; i = k;
			while (i >= 0){
				i = i-1; c = c*2;
				d = (d*d) % n;
				if (b(i) = 1){
					c = c+1;
					d = (d*a) % n;}}
		}
		...
		t = new Thread(mexp()); 
		j = 0; m = 2 * k;
		while (j < m & !t.stop){
			j++;}\end{lstlisting}
	\caption{\small Modular exponentiation.}
	\label{fig:modexp}
\end{wrapfigure}
difference 
in execution time for different inputs to the algorithm. Stuttering could hide these differences.
Consider the code snippet in Fig.~\ref{fig:modexp} representing the modular exponentiation algorithm, which is part of the RSA public-key encryption protocol. It computes the value of $a^b \mod n$ where $\mathtt{a}$ (integer) is the plaintext and $\mathtt{b}$ (integer) is the encryption key. In~\cite{atva20}, we verified that we can notice the timing difference using a synchronous logic. We formalize in \AHyperPCTL that for \emph{any} possible scheduling of the two threads there exists possible stuttering that prevents the timing leak:
\begin{align*}
	\forall \varsched .\; 
	&\forall \variable(\varsched) .\; \forall \variable'(\varsched) .\;
	\exists \varssched(\variable) . \exists \varssched'(\variable') .\ \\
	&\left(h_{\varssched} {\neq} h_{\varssched'} \wedge \textit{init}_{\varssched} \wedge \textit{init}_{\varssched'}\right) 
	\implies
	\textstyle 
	\left(\bigwedge_{l=0}^{m} \pr(\Finally (j=l)_{\varssched}) = \pr(\Finally (j=l)_{\varssched'})\right)
\end{align*}%

	\section{Model Checking AHyperPCTL}
\label{sec:algo}

Due to general undecidability, we propose a model checking algorithm for a practically useful semantical fragment of \AHyperPCTL:
(1) we restrict scheduler quantification to probabilistic memoryless schedulers such that
	the same probabilistic decisions are made in states with identical enabled action sets, i.e., 
	if $\Act(\state) = \Act(\state')$, then $\act(\state, \alpha) = \act(\state', \alpha)$ for all $\alpha \in \Act(s)$, and
(2) stutter quantification ranges over $\stutterlength$-bounded counting stutter-schedulers.

These restrictions were chosen to achieve decidability but still be expressive enough for our applications.

For simplicity, here we describe the case for a single scheduler quantifier, but the algorithm can be extended to an arbitrary number of scheduler quantifiers. 
Additionally, we only describe the algorithm for existential scheduler and stutter quantification. 
The extension to purely universal
quantification is straightforward; we will discuss the handling of quantifier alternation
in Sec. \ref{sec:eval}.

\begin{figure}[t]
	\scalebox{0.95}{
		\begin{minipage}{1.04\linewidth}
			\begin{algorithm}[H]
			\caption{Main SMT encoding algorithm}
			\label{alg:main}
			\label{alg:truth}
			
			\KwIn{$\mdp=\fullMDP$: MDP; 
				$\stutterlength$: Memory size for the stutter-schedulers; 
				\\ \phantom{Input: }
				$\phifull$: 
				\AHyperPCTL formula.}
			\KwOut{Whether $\mdp$ satisfies the input formula.}
		\Fn{\FMain{$\mdp, \stutterlength, \phifull$}}{
		{$\varphi_{\textit{sch}}:= \bigwedge_{\emptyset{\not=} A \subseteq \Act} \left(\bigwedge_{\alpha \in A} 0\, {\leq}\, \sched_{A, \alpha}\, {\leq}\, 1\right) \wedge \sum_{\alpha \in A} \sched_{A, \alpha} {=} 1 $}\tcp*{scheduler choice} \label{line:actionenc}
		{$\phantom{\varphi_{\textit{sch}}:=}  \wedge \bigwedge_{i=1}^{\numstutter} \bigwedge_{s \in \states} \bigwedge_{\alpha \in \Act(s)} (\bigvee_{j=0}^{\stutterlength-1} \ssched_{i,s,\alpha} = j)$}\tcp*{stuttering choice}
		\label{line:stutterenc}
        {$\phantom{\varphi_{\textit{sch}}:=} \wedge
        	\bigwedge_{i=1}^{n} \bigwedge_{(\state,j)\in\states\times[m]}\bigwedge_{\alpha\in\Act(\state)}\bigwedge_{(\state',j')\in\fsucc(\state, \alpha)}  \varphi_{\go_{i, (s,j),\alpha,(s',j')}}$}\;
        \label{line:go}
        {$\phantom{\varphi_{\textit{sch}}:=} \wedge
        	\bigwedge_{i=1}^{n} \bigwedge_{(\state,j)\in\states\times[m]}\bigwedge_{\alpha\in\Act(\state)}\bigwedge_{(\state',j')\in\fsucc(\state, \alpha)}  \varphi_{\Tr_{i, (s,j),\alpha,(s',j')}}$}\;
        \label{line:Tr}
        \label{line:schedend}
        $ \varphi_{\textit{sem}} := \FSem(\mdp, \numstutter, \pnonquant)$ \tcp*{semantics of $\pnonquant$}\label{line:sem}
		\ForEach(\tcp*[f]{encode state quantifiers}){$i=1, \ldots, \numstates$}{
			\label{line:truthstart}
			\label{line:truthstatestart}
			\lIfElse{$Q_{i} {=} \forall$}
			{$A_i := $ ``$\bigwedge_{s_i \in S}$''}
			{$A_i := $ ``$\bigvee_{s_i \in S}$''}\label{line:truthstateend}
		}
		$ \varphi_{\textit{tru}} := A_1 \ldots A_{\numstates} \
			\Bigl(
			\holdsWith[((s_{k_{1}}, 0),\ldots,(s_{k_{\numstutter}}, 0))][\pnonquant]
			\Bigr)$\tcp*{truth of input formula}
		\label{line:truthend}
		\lIfElse{$check(\varphi_{\textit{sch}} \wedge  \varphi_{\textit{sem}}\wedge \varphi_{\textit{tru}} )=SAT$}{\Return \textit{TRUE}}{\Return \textit{FALSE}}
		}
	\end{algorithm}
	\end{minipage}
        }
\end{figure}

\begin{figure}[t]
	\scalebox{0.95}{
		\begin{minipage}{1.04\linewidth}
			\begin{algorithm}[H]
	\caption{SMT encoding for the meaning of the non-quantified formula}
	\label{alg:semEnc}
	
	\KwIn{$\mdp = \fullMDP$: MDP; $\numstutter$: number of experiments;
		\\ \phantom{Input: }
		$\phi$: quantifier-free \AHyperPCTL formula or expression.
	}
	\KwOut{SMT encoding of the meaning of $\phi$ for $\mdp$.}
	
	\Fn{\FSem{$\mdp, \numstutter, \phi$}}{
          	\If{$\phi$ is $\pr(\phi_1 \U \phi_2)$}{
			{$E := \FSem(\mdp, \phi_1, \numstutter) \wedge \FSem(\mdp, \phi_2, \numstutter)$\;}
			\ForEach{$\statetup=((s_1,j_1),\ldots,(s_n,j_n))\in (\states\times [m])^{\numstutter}$}{
				{$E := E \wedge (\holdsWith[][\phi_2] \implies \probWith{=}1)\wedge \bigl((\neg\holdsWith[][\phi_1] \wedge \neg\holdsWith[][\phi_2]) \implies \probWith{=}0 \bigr) $\;}
				{$E := E \wedge
						\Biggl[\Bigl[
						\holdsWith[][\phi_1] \wedge 
						\neg\holdsWith[][\phi_2] 
						\Bigr] 
						\implies 
						\biggl[
						\probWith = \displaystyle
							\sum_{\actiontup \in \Act(\statetup)}\;
							\sum_{\statetup' \in \fsucc(\statetup, \actiontup)}\;
							\big(
							\prod_{i=1}^n 
							\sched_{\Act(s_i),\alpha_i} \cdot
							\go_{i, \statetup_i,\alpha_i,\statetup_i'}
							\cdot 
							\Tr_{i, \statetup_i,\alpha_i,\statetup_i'} \big)
							\cdot \probWith[\statetup'][\phi]  
						\;\wedge$ $
						\Bigl[
						\probWith{>}0 \implies 
						\displaystyle
							\bigvee_{\actiontup \in \Act(\statetup)} 
							\bigvee_{\statetup' \in \fsucc(\statetup, \actiontup)}
						\bigl(\displaystyle 
						\prod_{i=1}^n 
						\sched_{\Act(s_i),\alpha_i}
						\cdot  
						\go_{i, \statetup_i,\alpha_i,\statetup_i'}
                                                {>}0
						\wedge 
						(\holdsWith[\statetup'][\phi_2] 
						\vee 
						d_{\statetup,\phi_2} {>} d_{\statetup',\phi_2} )\bigr)
						\Bigr]
						\biggr]\Biggr]$\;}
					\label{line:until}
			}	
		}
		\textbf{else if} 
		\ldots \\
		{\Return $E$\;}
	}
\end{algorithm}
\end{minipage}
}
\end{figure}

Our \AHyperPCTL model checking method adapts the \HyperPCTL algorithm \cite{atva20} with two major extensions: (1) we consider probabilistic memoryless schedulers instead of deterministic memoryless ones and (2) we support stuttering.
Assume in the following an MDP $\mdp=\fullMDP$, a memory bound $m$ for stutter-schedulers, and an input \AHyperPCTL formula $\varphi$. Our method generates a quantifier-free real-arithmetic formula $\varphi_{sch}\wedge \varphi_{tru}\wedge \varphi_{sem}$ that is satisfiable if and only if $\mdp\models\varphi$ (under the above restrictions on the domains of schedulers and stutter-schedulers).
The main method (Alg.~\ref{alg:main}) generates this encoding.
\\[0.5ex]
\ 1) In $\varphi_{\textit{sch}}$ (Lines~\ref{line:actionenc}--\ref{line:schedend}) we encode the scheduler probabilities and counting stutter-scheduler choices.
We use real-valued variables $\sched_{A, \alpha}$ to encode the probability of choosing $\alpha$ in state $s$ with $\Act(s) = A$, and variables $\ssched_{i, s, \alpha}$ with domain $[m]$ ($m$ being the stutter-scheduler memory bound) to represent the stuttering duration for state $s$ and action $\alpha$ under the $i$th stutter-scheduler quantifier.

For $\statetupmc=((s_1, j_1),\ldots,(s_n, j_n))\in(\states\times [m])^n$ we define $\Act(\statetupmc)=\Act(s_1)\times\ldots\times\Act(s_n)$.
The calculation of successor states for the encoding of the temporal operators depends on the chosen stutterings.
To describe possible successors, we use two mechanisms:
(i) For each $\statetup\in(\states\times [m])^n$ and $\boldsymbol{\alpha}\in\Act(\statetup)$ we define $\fsucc(\statetup, \boldsymbol{\alpha})$ to be the set of all $\statetup'=((s_1',j_1'),\ldots,(s_n',j_n'))\in (S \times [m])^n$ which under \emph{some} stutter-scheduler could be successors of $\statetup$ under $\actiontup$, i.e., such that 
\[\small
	\forall 1\leq i \leq n.\ 
	((j_i<m-1\wedge s_i=s_i'\wedge j_i'=j_i+1)\vee (\tpm(s_i, \alpha_i, s'_i)>0 \wedge j'_i = 0)) \, .
\]
(ii) 
For each $1\leq i\leq n$, $(s,j)\in (S\times [m])$, $\alpha \in \Act(s)$, and $(s',j') \in \fsucc(s,\alpha)$ we define a pseudo-Boolean variable $\go_{i, (s, j),\alpha,(s',j')}$ as well as a real variable $\Tr_{i, (s,j),\alpha,(s',j')}$ and define the formulas 
\begin{eqnarray*}
	\varphi_{\go_{i, (s, j),\alpha,(s',j')}} &=& (\go_{i, (s, j),\alpha,(s',j')}{=}0\vee\go_{i, (s, j),\alpha,(s',j')}{=}1)\wedge \big(\go_{i, (s, j),\alpha,(s',j')}{=}1
	\\
	&&\leftrightarrow ((j<\ssched_{i,s,\alpha}\wedge j'=j+1)\vee(j\geq\ssched_{i,s,\alpha}\wedge j'= 0)) \big)
	\\
	\varphi_{\Tr_{i, (s,j),\alpha,(s',j')}} &=& 
	(j'=j+1 \wedge \Tr_{i, (s,j),\alpha,(s',j')} =1) \vee \\
	&& (j'= 0 \wedge \Tr_{i, (s,j),\alpha,(s',j')} = \tpm(s,\alpha,s'))
	\ .
\end{eqnarray*}
2) In $\varphi_{\textit{sem}}$ (Line \ref{line:sem}) we encode the semantics of the quantifier-free part $\pnonquant$ of the input formula by calling Alg.~\ref{alg:semEnc}.
The truth of each Boolean subformula $\phi'$ of $\pnonquant$ at state sequence $\statetup$ is encoded in a Boolean variable $\holdsWith[][\phi']$. We also define variables $\holdsI[][\phi']$ for the integer encoding (i.e., 0 or 1) of $\holdsWith[][\phi']$, and real-valued variables $\probWith[\statetup][\phi'']$ for values of probability expressions $\phi''$.
\\[0.5ex]
\ 3) 
In $\varphi_{\textit{tru}}$ (Lines \ref{line:truthstart}--\ref{line:truthend}) we state the truth of the input formula by first encoding the state quantifiers (Lines \ref{line:truthstatestart}--\ref{line:truthstateend}) and then stating the truth of the quantifier-free part $\pnonquant$ under all necessary state quantifier instantiations $(s_1{,}{\ldots}{,}s_l)$, i.e., where experiment $i{\in}\{1{,}{\ldots}{,}n\}$ starts in state $s_{k_i}$ and stutter-scheduler mode $0$ (Line \ref{line:truthend}).

\smallskip

In Algorithm~\ref{alg:semEnc}, we recursively encode the meaning of atomic propositions and Boolean, temporal, arithmetic and probabilistic operators. 
Due to space constraints, here we present only the encoding for the temporal operator ``until'', and refer to the appendix for the full algorithm. 

For the encoding of the probability $\pr(\phi_1 \U \phi_2)$ that $\phi_1 \U \phi_2$ is satisfied along the executions starting in state $\statetup \in (\states\times [m])^{\numstutter}$, the interesting case, where $\phi_1$ holds in $\statetup$ but $\phi_2$ does not, is in Line \ref{line:until}. The probability is a sum over all possible actions and potential successor states. 
Each summand is a product over 
(i) the probability of choosing the given action tuple, 
(ii) pseudo-Boolean values which encode whether the potential successor state is indeed a successor state under the encoded stutter-schedulers, 
(iii) real variables encoding the probability of moving to the given successors under the encoded stutter-schedulers, and 
(iv) the probability to satisfy the until formula from the successor state.
We use real variables $d_{\statetup,\phi, \sschedtup}$ to assure that finally a $\phi_2$-state will be reached on all paths whose probabilities we accumulate.

Our SMT encoding is a Boolean combination of Boolean variables, 
non-linear real constraints, 
and linear integer constraints. 
Our linear integer constraints can be implemented as linear real constraints. 
The non-linear real constraints stem from the encoding of the probabilistic schedulers, not from the encoding of the stutter-schedulers.
We check whether there exists an assignment of the variables such that the encoding is satisfied. 
SMT solving for non-linear real arithmetic without quantifier alternation has been proven to be solvable in exponential time in the number of variables \cite{smtKroening,smtHandbook}. 
However, all available tools run in doubly exponential time in the number of variables \cite{cadCollins,mcsatMoura,coveringAbraham}. 
The number of variables of our encoding is exponential in the number of stutter quantifiers, and polynomial in the size of the formula, the number of states and actions of the model, and the memory size for the stutter-schedulers. 
Hence, in practice, our implementation is triple exponential in the size of the input. 
This yields an upper bound on the complexity of model checking the considered fragment.

The size of the created encoding is exponential in the number of stutter-schedulers and polynomial in the size of the \AHyperPCTL formula, the number of states and actions of the model and the memory-size for the stutter-schedulers.

	\section{Implementation and Evaluation}
\label{sec:eval}
We implemented the model checking algorithm described in Section~\ref{sec:algo} based on the existing implementation for \HyperPCTL, using the SMT-solver Z3 \cite{dmb08}. 
We performed experiments on a PC with a 3.60GHz i7 processor and 32GB RAM. 
Our implementation and case studies are available at \url{https://github.com/carolinager/A-HyperProb}.
It is important to note that checking the constructed SMT formula is more complicated than in the case for \HyperPCTL, since the SMT formula contains non-linear real constraints due to the probabilistic schedulers, whereas the SMT formula for \HyperPCTL contains only linear real arithmetic.

We optimized our implementation by reducing the number of variables as described in \cite{Dobe22a} based on the quantifiers relevant for the encoding of the considered subformula.
However, for interesting properties like the properties presented in Section~\ref{sec:app}, where we want to compare probabilities in different executions, we nevertheless have to create a variable encoding the validity of a subformula at a state for $(|\states|\cdot \stutterlength)^{\numstutter}$ combinations of states for each subformula.

All example applications presented in Sec.~\ref{sec:intro} and \ref{sec:app} consist of universal scheduler and state quantification, and existential stutter quantification. 
However, our implementation is restricted to existential scheduler and stutter quantification.
Allowing arbitrary quantifiers for scheduler and stutter quantification is also possible in theory, but we found it to be infeasible in practice.
For universal quantifiers we would need to check whether the SMT encoding holds for all possible assignments of the variables encoding the schedulers and stutter-schedulers, while for all other variables we only check for existence. 
This would yield an SMT instance with quantifier alternation, which is considerably more difficult than the current encoding with purely existential quantification \cite{cadBrown}. 
Alternatively, we can encode the semantics for each possible combination of stutter-schedulers separately, meaning that we have to use variables $h_{\statetup, \phi, \sschedtup}$ encoding the truth of $\phi$ at $\statetup$ under $\sschedtup$.
This makes the number of variables exponential in the number of stutter quantifiers and the number of states and actions of the model, and polynomial in the size of the formula and the memory-size for stuttering.
As a result, the implementation scales very badly, even after decreasing the number of variables via an optimization based on the relevant quantifiers. For all our applications, creating the semantic encoding exceeded memory after 30 minutes at the latest in this case.

Since neither option is a viable solution, for our case studies we instead consider the presented formulas with existential instead of universal scheduler quantification. 
In future work, it would be worth to explore how one could employ quantifier elimination to generate a set of possible schedulers from one scheduler instance satisfying the existential quantification.

\begin{table}[t]
	\centering
	\scalebox{0.83}{
		\begin{tabular}{|c|l||r|r|r||c|@{\,}c@{\,}|@{\,}c@{\,}||@{\,}c@{\,}|@{\,}c@{\,}|}
			\hline
			\multicolumn{2}{|c|}{\bf Case} & 
			\multicolumn{3}{c||}{\bf Running time ($s$)} & 
			\multicolumn{3}{c||}{SMT Solver} &
			\multicolumn{2}{c|}{Model}
			\\
			\cline{3-10} \multicolumn{2}{|c|}{\bf study} & {\bf Enc.} & {\bf Solving} & {\bf Total} & result & \#variables  & \#subform. & \#states & \#transitions \\
			\hline\hline
			
			\multirow{2}{*}{\CE} & $m=2$, $h=(0,1)$ & $0.25$ & $15.82$ & $16.07$ & sat & 829 & 341
			& 7 & 9 \\
			\cline{2-10}	
			& $m=2$, $h=(0,2)$  & $0.38$ & DNF & - & - & 1287 & 447 & 9 & 12 \\
			\hline\hline
			\TL & $m=2$, $k=1$ & $0.99$ & DNF & - & - & 3265 & 821 & 15
			& 23 \\
			\hline\hline 
			\ACDB & $m=2$ & $9.17$ & OOM & - & - & 14716 & 1284 & 24 & 36 \\
			\hline
	\end{tabular}}
	{%
		\vspace{2mm}
		\caption{Experimental results. \CE: Classic example, \TL: Timing leakage, \ACDB: Output information leak. DNF: did not finish, OOM: out of memory.}
		\label{tab:results}
	}
\vspace{-1.5em}
\end{table}

The results of our case studies are presented in Table~\ref{tab:results}.
Our first case study, \CE, is the classic example presented in Sec.~\ref{app:classic}. 
We compare executions with different initial values $h_1$ and $h_2$, denoted by $h=(h_1,h_2)$.
For $h=(0,1)$, the property can already be satisfied for the smallest non-trivial memory size $m=2$. For higher values of $h_2$, however, the SMT solver does not finish solving after 1 hour.
The second case study, \TL, is the side-channel timing leak described in Sec.~\ref{app:sidechannel}. We found that already for encryption key length $k=1$, and memory size $m=2$, the SMT solver did not finish after 1 hour even for a smaller formula, where we restrict the conjunction to the case $l=0$. 
Our third case study, \ACDB, is the output information leak presented in Sec.~\ref{app:abcd}. Here, the SMT solving exceeds memory after 18 minutes, even if we check only part of the conjunction.

We see several possibilities to improve the scalability of our implementation.
Firstly, we could experiment with different SMT solvers, like cvc5~\cite{cvc5}.
Secondly, we could parallelize the construction of encodings for different stutter-schedulers, and possibly re-use sub-encodings that are the same for multiple stutter-schedulers.
Another possibility would be to turn towards less accurate methods, and employ Monte Carlo or statistical model checking approaches.


	\section{Conclusions}
\label{sec:conclusion}

We proposed a new logic, called \AHyperPCTL, which is, to our knowledge, the first asynchronous probabilistic hyperlogic.
\AHyperPCTL extends \HyperPCTL by quantification over stutter-schedulers, which allow to specify when the execution of a program should stutter.
This allows to verify whether there exist stuttering variants of programs that would prevent information leaks, by comparing executions of different stuttering variations of a program. 
\AHyperPCTL subsumes \HyperPCTL. Therefore, the \AHyperPCTL model checking problem on MDPs is, in general, undecidable.
However, we showed that the model checking is decidable if we restrict the quantification to probabilistic memoryless schedulers and deterministic stutter-schedulers with bounded memory.
Since our prototype implementation does not scale well, future work could investigate the use of other SMT solvers, statistical model checking, or Monte Carlo methods, as well as a feasible extension to quantifier alternation for scheduler and stutter quantifiers.

\subsubsection{Acknowledgements} Lina Gerlach is supported by the DFG RTG 2236/2 \textit{UnRAVeL}. Ezio Bartocci is supported by the Vienna Science and Technology Fund (WWTF) [10.47379/ICT19018]. This work is also partially sponsored by the United States NSF SaTC awards 2245114 and 2100989.

	\bibliographystyle{splncs04}
	\bibliography{bibliography}
	
	\appendix
	\section{Appendix}
\label{sec:appendix}

\begin{figure}[h!tb]
	\scalebox{0.95}{
		\begin{minipage}{1.04\linewidth}
			\begin{algorithm}[H]
	\caption{Full SMT encoding for the meaning of the input formula}
	\label{alg:sem_full}
	
	\KwIn{$\mdp = \fullMDP$: MDP; $\stutterlength$: number of experiments;
		\\ \phantom{Input: }
		$\phi$: quantifier-free \AHyperPCTL formula or expression.
	}
	\KwOut{SMT encoding of the meaning of $\phi$ in $\mdp$.}
	
	\Fn{\FSem{$\mdp, \stutterlength, \phi$}}{
		\lIf{$\phi$ is $\mathtt{true}$}
		{$E := \bigwedge_{\statetup \in \Splus^{\numstutter}} \holdsWith$\tcp*[f]{$\Splus = S \times [m]$}}
		\ElseIf{$\phi$ is $a_{\variable[\ssched]_i}$}{
			$E := (\bigwedge_{\statetup \in \Splus^{\numstutter}, a \in L(s_{i})} \holdsWith) \wedge (\bigwedge_{\statetup \in \Splus^{\numstutter}, a \not\in L(s_{i})} \neg\holdsWith) $\;
		}
		\ElseIf{$\phi$ is $\phi_1 \wedge \phi_2$}{
			{$E : = \FSem(\mdp, \stutterlength, \phi_1) \wedge \FSem(\mdp, \numstutter, \phi_2)$\;}
			{$E := E \wedge 
				\bigwedge_{\statetup \in \Splus^{\numstutter}} ((\holdsWith \wedge \holdsWith[][\phi_1] \wedge \holdsWith[][\phi_2]) \vee (\neg\holdsWith \wedge \neg(\holdsWith[][\phi_1] \wedge \holdsWith[][\phi_2])))$\;}
		}
		\ElseIf{$\phi$ is $\neg\phi'$}{
			$E : = \FSem(\mdp, \stutterlength, \phi') \wedge \bigwedge_{\statetup \in \Splus^\numstutter} (\holdsWith \oplus \holdsWith[][\phi'])$\;
		}
		\ElseIf{$\phi$ is $\phi_1 < \phi_2$}{
			{$E := \FSem(\mdp, \stutterlength, \phi_1) \wedge \FSem(\mdp, \stutterlength, \phi_2)$ \;}
			\ForEach{$\statetup = ((s_1, j_1), \ldots, (s_{\numstutter}, j_{\numstutter})) \in \Splus^{\numstutter}$ }{
				{$E := E \wedge$ $
				\bigl( (\holdsWith \wedge \probWith[][\phi_1] < \probWith[][\phi_2]) \vee  (\neg\holdsWith \wedge \probWith[][\phi_1] \geq \probWith[][\phi_2]) \bigr) $\;}
			}
		}
		\ElseIf{$\phi$ is $\pr(\Next \phi')$}{
			$E := \FSem(\mdp, \stutterlength, \phi')$\; 
			\ForEach{$\statetup = ((s_1, j_1), \ldots, (s_{\numstutter}, j_{\numstutter})) \in \Splus^{\numstutter}$}{
				{$E := E \wedge
					\Bigl((\holdsI[][\phi'] {=} 1 \wedge \holdsWith[][\phi']) \vee (\holdsI[][\phi'] {=} 0 \wedge \neg\holdsWith[][\phi'])\Bigr)$\;}
				{$E := E \wedge \displaystyle 
					\probWith = \displaystyle
					\sum_{\actiontup \in \Act^n}
					\sum_{\statetup' \in \fsucc(\statetup, \actiontup)}
					\;\prod_{i=1}^n 
					(\go_{i, \statetup_i, \alpha_i, \statetup'_i} \cdot
					\sched_{\Act(s_i),\alpha_i} \cdot
					\Tr_{i, \statetup_i,\alpha_i,\statetup_i'})
					\cdot
					\holdsI[\statetup'][\phi']
					$\;}
			}
		}
		\lElseIf{$\phi$ is $\pr(\phi_1 \U \phi_2)$}{$E:= \FSemUUntil(\mdp, \stutterlength, \phi)$}
		\lElseIf{$\phi$ is $c$}{$E := \bigwedge_{\statetup \in \Splus^{\numstutter}} (\probWith = c)$} 
		\ElseIf(\tcp*[f]{$\mathop{op} \in \{+, -, *\}$}){$\phi$ is $\phi_1 \mathrel{op} \phi_2$}{
			{$E := \FSem(\mdp, \stutterlength, \phi_1) \wedge \FSem(\mdp, \stutterlength, \phi_2)$\;}
			{$E := E \wedge 
				\bigwedge_{\statetup \in \Splus^{\numstutter}} (\probWith\ {=}\ (\probWith[][\phi_1] \mathop{op} \probWith[][\phi_2]))$\;}
		}
		{\Return $E$\;}
	}
\end{algorithm}
\end{minipage}
}
\end{figure}

\begin{figure}[H]
	\scalebox{0.95}{
		\begin{minipage}{1.04\linewidth}
			\begin{algorithm}[H]
	\caption{SMT encoding for the meaning of unbounded until formulas}
	\label{alg:semUuntil_full}
	
	\KwIn{$\mdp = \fullMDP$: MDP; $\stutterlength$: number of experiments;
		\\ \phantom{Input: }
		$\phi$: \AHyperPCTL unbounded until formula.
	} 
	\KwOut{SMT encoding of the meaning of $\phi$ in $\mdp$.}
	
	\Fn{\FSemUUntil{$\mdp, \stutterlength, \phi = \pr(\phi_1 \U \phi_2)$}}{
	{$E := \FSem(\mdp, \phi_1, \numstutter) \wedge \FSem(\mdp, \phi_2, \numstutter)$\;}
	\ForEach{$\statetup=((s_1,j_1),\ldots,(s_n,j_n))\in (\states\times [m])^{\numstutter}$}{
		{$E := E \wedge (\holdsWith[][\phi_2] \implies \probWith{=}1)\wedge \bigl((\neg\holdsWith[][\phi_1] \wedge \neg\holdsWith[][\phi_2]) \implies \probWith{=}0 \bigr) $\;}
			{$E := E \wedge
				\Biggl[\Bigl[
				\holdsWith[][\phi_1] \wedge 
				\neg\holdsWith[][\phi_2] 
				\Bigr] 
				\implies 
				\biggl[
				\probWith = \displaystyle
				\sum_{\actiontup \in \Act(\statetup)}\;
				\sum_{\statetup' \in \fsucc(\statetup, \actiontup)}\;
				\big(
				\prod_{i=1}^n 
				\sched_{\Act(s_i),\alpha_i} \cdot
				\go_{i, \statetup_i,\alpha_i,\statetup_i'}
				\cdot 
				\Tr_{i, \statetup_i,\alpha_i,\statetup_i'} \big)
				\cdot \probWith[\statetup'][\phi]  
				\;\wedge$ $
				\Bigl[
				\probWith{>}0 \implies 
				\displaystyle
				\bigvee_{\actiontup \in \Act(\statetup)} 
				\bigvee_{\statetup' \in \fsucc(\statetup, \actiontup)}
				\bigl(\displaystyle 
				\prod_{i=1}^n 
				\sched_{\Act(s_i),\alpha_i} \cdot
				\go_{i, \statetup_i,\alpha_i,\statetup_i'}
				{>}0   
				\wedge 
				(\holdsWith[\statetup'][\phi_2] 
				\vee 
				d_{\statetup,\phi_2} {>} d_{\statetup',\phi_2} )\bigr)
				\Bigr]
				\biggr]\Biggr]$\;}
	}
	{\Return $E$\;}	
	}
\end{algorithm}
\end{minipage}
}
\end{figure}

\end{document}
\typeout{get arXiv to do 4 passes: Label(s) may have changed. Rerun}